\documentclass[journal]{IEEEtran}
\usepackage{amsmath,amsfonts,amssymb,mathrsfs}

\usepackage{algorithm}
\usepackage{algpseudocode}

\usepackage{cite}

\usepackage[caption=false,font=footnotesize]{subfig}
\usepackage{graphicx}

\usepackage{tikz, pgfplots}
\usetikzlibrary{shapes, arrows, arrows.meta, fit, positioning, calc, 3d}
\usepgfplotslibrary{fillbetween}
\pgfplotsset{compat=1.18} 
\tikzset{>=latex}

\newtheorem{theorem}{Theorem}

\newtheorem{remark}{Remark}
\newtheorem{definition}{Definition}

\newtheorem{corollary}{Corollary}

\allowdisplaybreaks
\usepackage[acronym]{glossaries}
\newcommand{\newac}{\newacronym}
\newcommand{\ac}{\gls}

\newac{rs}{RS}{randomized stationary}
\newac{iid}{\emph{i.i.d.}}{independent and identically distributed}
\newac{zoh}{ZOH}{zero-order hold}
\newac{aoi}{AoI}{age of information}
\newac{iot}{IoT}{Internet of Things}
\newac{iaws}{IAWS}{independence-assumed weighted-sum}
\newac{cra}{CRA}{confidential reconstruction accuracy}
\newac{los}{LOS}{line of sight}
\newac{nlos}{NLOS}{non-line of sight}

\begin{document}
\bstctlcite{IEEEtran:no_dash_repeated_names}

\title{Joint Accuracy and Confidentiality in Semantic-Aware Secure Remote Reconstruction}
\author{Bowen~Li,~\IEEEmembership{Member,~IEEE},~and~Nikolaos~Pappas,~\IEEEmembership{Senior~Member,~IEEE}
\thanks{The authors are with the Department of Computer and Information Science, Link\"oping University, Link\"oping 58183, Sweden (e-mail: bowen.li@liu.se; nikolaos.pappas@liu.se). This work was supported by ELLIIT, and the European Union (6G-LEADER, 101192080, and MAGIC-6G, 101292933).}
}

\maketitle
\begin{abstract}
In this paper, we consider remote reconstruction over wireless networks when simultaneous accuracy at the legitimate receiver and confidentiality against eavesdropping are required. These two objectives are often treated separately, even though they arise from the same update process and are marginals of a joint reconstruction event. This paper introduces confidential reconstruction accuracy (CRA), a metric to capture the joint event in which the legitimate receiver reconstructs correctly while the eavesdropper fails. Under randomized stationary (RS) policies, we develop a three-dimensional stationary analysis and derive closed-form expressions for the long-term average CRA and the optimal transmission probability. The results show that conventional marginal analysis may misidentify the optimal RS policy and misestimate the achievable simultaneous accuracy-confidentiality performance. Numerical results reveal nontrivial behaviors: more frequent transmissions or better legitimate channels do not necessarily improve joint accurate and confidential reconstruction, and when the eavesdropping channel is strong, improving the legitimate channel alone may be insufficient. Finally, the framework induces the spatial safety boundary in a geofencing setting for secure remote reconstruction.
\end{abstract}

\begin{IEEEkeywords}Confidential reconstruction accuracy, secure remote systems, stationary analysis, geofencing.
\end{IEEEkeywords}

\section{Introduction}
Remote reconstruction over wireless networks is fundamental to cyber-physical systems and \ac{iot} applications~\cite{KouPap:J21,GilYemChoNed:J25}. Owing to the broadcast nature of wireless propagation, transmitted updates are inherently vulnerable to eavesdropping~\cite{LiuXiaLiLia:J12}. In adversarial environments, each update jointly shapes both the legitimate receiver’s and the eavesdropper’s reconstruction~\cite{GopLaiEl:J08,TsiGatPap:J20}, so the value of the update cannot be assessed from the legitimate side alone. Hence, the fundamental objective is not accuracy alone, nor confidentiality alone, but the joint event that the legitimate receiver reconstructs the source correctly while the eavesdropper fails.

Classical secure communication prioritizes bit-level protection against eavesdropping, with representative metrics such as secrecy capacity and equivocation characterizing achievable confidential rates and the eavesdropper’s uncertainty about the message~\cite{OggHas:J11,LeuHel:J78}. However, these metrics do not account for the task-specific importance of information. In remote reconstruction, this limitation is critical~\cite{SunUysEliYat:J17}; transmitting redundant or predictable updates provides negligible accuracy gains while unnecessarily increasing information leakage.
Recent semantic-aware research addresses this issue by optimizing updates based on the value of information~\cite{LuoDelSalPap:A25,LuoLiPap:J26}, yet these frameworks typically focus on the legitimate side, leaving confidentiality unaddressed at the semantic level.

Some existing frameworks consider both accuracy and confidentiality, for example, through threshold-based constraints, Pareto formulations, or weighted combinations of separate metrics~\cite{ZhaXuLanChe:J24,LiLiLvJu:J25,WanGuoWanBai:J25}. However, they characterize the legitimate and adversarial sides separately and combine them afterward. In secure remote reconstruction, such formulations do not directly capture the event of interest, because the same transmission policy jointly governs legitimate reconstruction and adversarial exposure through a common update process. Accordingly, accuracy and confidentiality are not simply two separate objectives, but different marginals of a joint reconstruction process. The relevant target is therefore the joint event that the legitimate receiver reconstructs correctly while the eavesdropper fails.

This work addresses two fundamental questions: \emph{(i) how to characterize joint accuracy and confidentiality in secure remote reconstruction}, and \emph{(ii) how this joint characterization affects transmission policy design}. To this end, we investigate semantically aware confidential remote reconstruction under \ac{rs} policies, making three main contributions. First, we introduce the \ac{cra} metric for joint accurate-and-confidential reconstruction and derive closed-form expressions for its long-term average and the optimal transmission probability via three-dimensional Markov chain stationary analysis. Second, we show that conventional marginal analysis fails to capture the intrinsic coupling between legitimate reconstruction and adversarial exposure, which leads to suboptimal \ac{rs} policies and misestimates simultaneous accuracy-confidentiality performance; the joint characterization further reveals nontrivial structural behaviors, including that more frequent transmissions or better legitimate channels do not necessarily improve simultaneous accurate-and-confidential reconstruction, and that improving the legitimate channel alone may be insufficient when the eavesdropping channel is strong. Finally, we apply the proposed framework to a geofencing scenario, where the analytical results induce the spatial safety boundary that restricts eavesdropper proximity to ensure secure remote reconstruction.

\begin{figure}
    \centering
    \includegraphics[width=1\linewidth]{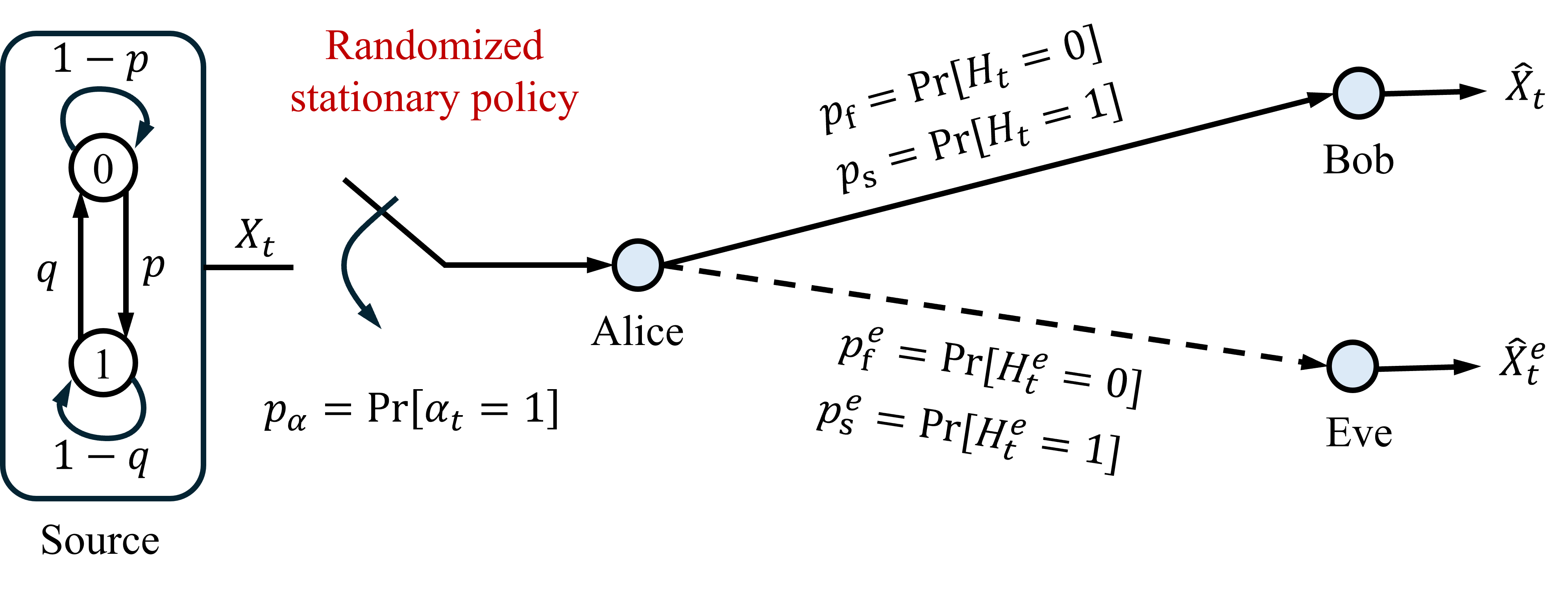}
    \caption{Remote reconstruction system over a wiretap channel with a transmitter (Alice), a legitimate receiver (Bob), and an eavesdropper (Eve).}
    \label{fig:cra_model}
\end{figure}

\section{System Model}\label{sec:system_model}
We consider a remote reconstruction system over a wiretap channel as in Fig.~\ref{fig:cra_model}, comprising a source, a transmitter (Alice), a legitimate receiver (Bob), and an eavesdropper (Eve).
\subsubsection{Source}
The source process is modeled as a time-homogeneous binary Markov chain $\{X_t\}_{t\geq 1}$, where $X_t\in\{0,1\}$ denotes the system state at slot $t\in\{1,2,\cdots\}$.
The state transition matrix is given by
\begin{equation*}
Q=\begin{bmatrix}
1-p & p\\
q & 1-q
\end{bmatrix},
\end{equation*}
where $p=\Pr[X_{t+1}=1\mid X_t=0]$ and $q=\Pr[X_{t+1}=0\mid X_t=1]$.

We assume $0<p,q<1$, so that the chain is irreducible and aperiodic (ergodic), and hence admits a unique stationary distribution $\boldsymbol{v}=[v_0,v_1]^T$ with $v_0=\frac{q}{p+q}$, $v_1=\frac{p}{p+q}$. The parameters $p$ and $q$ further characterize the temporal correlation of the source: $p+q<1$ corresponds to a persistent (positively correlated) process, $p+q=1$ corresponds to a memoryless (\ac{iid}) process, and $p+q>1$ corresponds to an alternating (negatively correlated) process~\cite{SeaWis:J23}.

\subsubsection{Transmission policy (Alice)}
At each slot $t$, Alice makes a transmission decision $\alpha_t\in\{0,1\}$, where $\alpha_t=1$ denotes sampling and transmitting a perfect observation of the source state $X_t$, and $\alpha_t=0$ denotes idleness. We focus on \ac{rs} policies, under which $\{\alpha_t\}$ is an \ac{iid} Bernoulli process independent of the source evolution, with
\begin{equation*}
\Pr[\alpha_t=1]=p_\alpha,\qquad \Pr[\alpha_t=0]=1-p_\alpha,
\end{equation*}
for all $t\geq 1$, where $p_\alpha\in(0,1]$ is the transmission probability.

\subsubsection{Wiretap Channel}
We consider a wiretap channel from Alice to Bob and Eve, where each transmission may be successfully received or erased. The main channel to Bob and the eavesdropping channel to Eve are modeled by two independent \ac{iid} Bernoulli processes, denoted by $\{H_t\}_{t\geq 1}$ and $\{H_t^e\}_{t\geq 1}$, respectively, with
\begin{equation*}
    \Pr[H_t=1]=p_s,\qquad \Pr[H_t^e=1]=p_s^e,
\end{equation*}
for all $t\geq 1$, where $p_s,\,p_s^e\in (0,1)$. $H_t=1$ means that the update transmitted at slot $t$ is successfully received by Bob, and $H_t^e=1$ means that the same update is successfully intercepted by Eve. The corresponding erasure probabilities are $1-p_s$ and $1-p_s^e$, respectively.

\subsubsection{Receiver Observation and Reconstruction (Bob and Eve)}
The observations at Bob and Eve are jointly determined by Alice's transmission decision and the corresponding channel realizations. Specifically, Bob receives
\begin{equation}
    Y_t=
    \begin{cases}
    X_t, & \text{if } \alpha_t=1 \text{ and } H_t=1,\\
    \varepsilon, & \text{otherwise},
    \end{cases}
    \label{eq:def_receive_model}
\end{equation}
where $\varepsilon$ denotes an erasure symbol. Similarly, Eve observes
\begin{equation}
    Y_t^e=
    \begin{cases}
    X_t, & \text{if } \alpha_t=1 \text{ and } H_t^e=1,\\
    \varepsilon, & \text{otherwise}.
    \end{cases}
    \label{eq:def_receive_model_eve}
\end{equation}

We assume that the receivers have no information about the source dynamics. Accordingly, both receivers employ a model-free estimator that uses the most recently received update. Let $\Theta_t$ and $\Theta_t^e$ denote the \ac{aoi} at Bob and Eve, respectively, \emph{i.e.}
\begin{equation}
    \Theta_t=t-\max\{\tau\le t: Y_\tau\neq \varepsilon\},
    \label{eq:def_aoi}
\end{equation}
and $\Theta_t^e$ is defined analogously using $Y_\tau^e$. The corresponding estimates are
\begin{equation}
    \hat X_t=X_{t-\Theta_t}, \qquad \hat X_t^e=X_{t-\Theta_t^e}.
    \label{eq:def_estimator}
\end{equation}

\subsubsection{Confidential Reconstruction Accuracy}
In secure remote reconstruction, the relevant target is the joint event that the legitimate receiver is correct while the eavesdropper is not, rather than a post hoc combination of separate accuracy and confidentiality marginals, \emph{e.g.}, \cite{SalKouPap:J24}. To characterize this event directly, we introduce \ac{cra}.
\begin{definition}
The instantaneous \ac{cra} at time $t$ is defined as
\begin{equation}
    A_{\text{CRA}}[t] := \mathbb{I}\left\{ \hat{X}_t = X_t,\ \hat{X}_t^e \neq X_t \right\},
    \label{eq:A_sec_def}
\end{equation}
where $\mathbb{I}\{\cdot\}$ denotes the indicator function. $A_{\text{CRA}}[t]=1$ if and only if the legitimate receiver correctly reconstructs the source state while the eavesdropper fails.
\end{definition}

The long-term average \ac{cra} is defined as
\begin{equation}
    \bar{A}_{\text{CRA}} := \liminf_{T\to\infty} \frac{1}{T} \sum_{t=1}^{T} \mathbb{E}\left[A_{\text{CRA}}[t]\right].
    \label{eq:avg_cra}
\end{equation}

\section{Analytical Results}
This section characterizes the joint source-reconstruction dynamics. By deriving the stationary distribution of the joint state, we obtain a closed-form expression for the long-term average \ac{cra}. This expression further enables the derivation of the optimal \ac{rs} policy. 

\subsection{System Dynamics}\label{subsec:system_dynamics}
The joint evolution of the source and estimators is captured by the composite state
$\mathbf{S}_t := (X_t, \hat{X}_t, \hat{X}^e_t) \in \{0,1\}^3.$ Given that the source $\{X_t\}$ is Markovian and the transmission decisions $\{\alpha_t\}$ and channel processes $\{H_t,H_t^e\}$ are mutually independent \ac{iid} sequences, $\{\mathbf{S}_t\}_{t \geq 1}$ constitutes a finite-state time-homogeneous ergodic Markov chain, as proved below. 

Since $\{X_t\}$ is Markovian, while the transmission decisions $\{\alpha_t\}$ and the channel processes $\{H_t\}$ and $\{H_t^e\}$ are \ac{iid}, mutually independent.
The updates of $\hat X_t$ and $\hat X_t^e$ depend only on the current state $(X_t,\hat X_t,\hat X_t^e)$ and the current realization of $(\alpha_t,H_t,H_t^e)$. Hence,
\begin{equation*}
\Pr(\mathbf{S}_{t+1}\mid \mathbf{S}_1,\ldots,\mathbf{S}_t)
=
\Pr(\mathbf{S}_{t+1}\mid \mathbf{S}_t).
\end{equation*}

For $\mathbf{s}=(x,a,b)$ and $\mathbf{s}'=(x',a',b')$, define the one-step transition probability as
\begin{align*}
    \Pr (\mathbf{s}, \mathbf{s}') 
    & := \Pr (\mathbf{S}_{t+1}=\mathbf{s}' \mid \mathbf{S}_{t} = \mathbf{s}) \\
    & = \Pr ( X_{t+1}=x',\hat{X}_{t+1}=a', \hat{X}^e_{t+1}=b' \mid \\
    & \qquad\quad X_{t}=x, \hat{X}_t=a, \hat{X}^e_t=b ).
\end{align*}
By the chain rule, $\Pr (\mathbf{s}, \mathbf{s}')$ becomes
\begin{align}
    &\Pr ( X_{t+1}=x' \mid X_{t}=x, \hat{X}_t=a, \hat{X}^e_t=b ) \cdot\Pr (\hat{X}_{t+1}=a', \nonumber\\
    & \qquad\hat{X}^e_{t+1}=b' \mid X_{t}=x, \hat{X}_t=a, \hat{X}^e_t=b, X_{t+1}=x' ),
    \label{eq:chain_rule_for_tp}
\end{align}
Since the source is independent of transmission decisions and channel processes, the first term in \eqref{eq:chain_rule_for_tp} becomes $\Pr ( X_{t+1}=x' \mid X_{t}=x) $. 

The second term is determined by the joint reception outcome. Define
\begin{align*}
    \lambda_{11} &:= p_{\alpha} p_s p_s^e, & (\text{Both reconstruct}) \\
    \lambda_{10} &:= p_{\alpha} p_s (1-p_s^e), & (\text{Only Bob reconstructs}) \\
    \lambda_{01} &:= p_{\alpha} (1-p_s) p_s^e, & (\text{Only Eve reconstructs}) \\
    \lambda_{00} &:= p_{\alpha} (1-p_s)(1-p_s^e) + \bar{p}_{\alpha}. & (\text{Neither reconstructs})
\end{align*}
Then the second term of \eqref{eq:chain_rule_for_tp}, denoted as $\mathcal{P}_\text{s}$, becomes 
\begin{align*}
\mathcal{P}_\text{s} &= \lambda_{11}\mathbb{I}(a'=x',b'=x')+\lambda_{10}\mathbb{I}(a'=x',b'=b) \\
& \quad +\lambda_{01}\mathbb{I}(a'=a,b'=x')+ \lambda_{00}\mathbb{I}(a'=a,b'=b).
\end{align*}

Combining the above, the transition kernel is given by 
\begin{equation}
    \Pr (\mathbf{s}, \mathbf{s}')  = \Pr ( X_{t+1}=x' \mid X_{t}=x)  \cdot\mathcal{P}_\text{s}.
\end{equation}
Since the source process is time-homogeneous and $\mathcal{P}_s$ is independent of $t$, the Markov chain $\{\mathbf{S}_t\}$ is time-homogeneous.

Since $p_\alpha\in(0,1]$, $p_s\in(0,1)$, and $p_s^e\in(0,1)$, all $\lambda$-parameters are strictly positive. Together with the irreducibility of $Q$, this ensures that all states of $\{\mathbf{S}_t\}$ communicate, so the chain is irreducible. Furthermore, because $\lambda_{00}>0$ and $Q(x,x)>0$ for some source state $x$, the chain has a self-loop and is aperiodic. Therefore, the finite-state chain $\{\mathbf{S}_t\}$ is ergodic. As a result, it admits a unique stationary distribution, and the long-term average exists.

\subsection{Stationary Distribution}
This subsection derives the joint stationary distribution of $\mathbf{S}_t$. Specifically, for any $(x,a,b)\in\{0,1\}^3$, define
\begin{equation*}\pi(x, a, b) := \lim_{t \to \infty} \Pr(X_t=x, \hat{X}_t=a, \hat{X}^e_t=b).\end{equation*}
The following theorem gives its closed-form expression.

\begin{theorem}\label{thm:stationary}
The joint stationary distribution $\pi(x, a, b)$ is
\begin{equation}
    \pi(x, a, b) = T_1(x,a,b) + T_2(x,a,b) + T_3(x,a,b),
\end{equation}
where
\begin{align}
    T_1 (x, a, b) &= \delta(a,b) v_a \lambda_{11} \left[ (I - \lambda_{00}Q)^{-1} \right]_{a,x}, \\
    T_2 (x, a, b) &= v_b \lambda_{10} P_B \left[ Q \left(I - (1-P_B)Q\right)^{-1} \right]_{b,a} \nonumber \\
        &\quad \times \left[ (I - \lambda_{00}Q)^{-1} \right]_{a,x}, \\
    T_3 (x, a, b) &= v_a \lambda_{01} P_A \left[ Q \left(I - (1-P_A)Q\right)^{-1} \right]_{a,b} \nonumber \\
        &\quad \times \left[ (I - \lambda_{00}Q)^{-1} \right]_{b,x}.
\end{align}
Here, $[M]_{i,j}$ denotes the entry of matrix $M$ with row index $i$ and and column index $j$, $\delta(a,b)$ is the Kronecker delta, and $P_A := \lambda_{11} + \lambda_{10}$ and $P_B := \lambda_{11} + \lambda_{01}$ denote the marginal success probabilities at Bob and Eve, respectively.
\end{theorem}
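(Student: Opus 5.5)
The plan is to compute $\pi(x,a,b)$ directly through a backward-looking renewal (last-reception) decomposition, rather than solving the $8\times 8$ balance equations. Since $\{\mathbf{S}_t\}$ is ergodic (Section~\ref{subsec:system_dynamics}), $\pi(x,a,b)=\lim_{t\to\infty}\Pr(\mathbf{S}_t=(x,a,b))$. I will therefore work with a stationary two-sided version: $X$ is started from $\boldsymbol{v}$ far in the past, and the reception outcomes form an i.i.d.\ sequence over the four events $\{11,10,01,00\}$ with probabilities $\lambda_{11},\lambda_{10},\lambda_{01},\lambda_{00}$, independent of $X$. Because $P_A,P_B>0$, almost surely both Bob and Eve have a most recent reception before $t$. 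By \eqref{eq:def_estimator}, $\hat X_t$ and $\hat X_t^e$ are then the source values at those two reception times.

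Fix $t$ and let $\tau\le t$ be the most recent slot whose outcome is not $00$. The slots $\tau+1,\dots,t$ all have outcome $00$. I partition according to the outcome at $\tau$.

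\emph{Case $11$.} Both estimates equal $X_\tau$, so $a=b=X_\tau$. Write $k=t-\tau\ge 0$. The event has probability $v_a\,\lambda_{11}\,\lambda_{00}^k\,[Q^k]_{a,x}$, using stationarity of $X_\tau$ and independence of the outcome sequence from the source. Summing the geometric matrix series $\sum_k \lambda_{00}^kQ^k=(I-\lambda_{00}Q)^{-1}$ gives $T_1$; the series converges since $\lambda_{00}<1$ and $Q$ is stochastic.

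\emph{Case $10$.} At $\tau$ Bob receives $a=X_\tau$, and Eve's estimate is $b=X_\sigma$, where $\sigma<\tau$ is Eve's last success before $\tau$. On the slots $\sigma+1,\dots,\tau-1$ Eve fails, which happens with probability $1-P_B$ each, while Bob's outcomes there are irrelevant. Slot $\sigma$ has an Eve success, with probability $P_B$. Setting $j=\tau-\sigma-1\ge 0$ and moving forward in time from $\sigma$, the probability is
\begin{equation*}
v_b\,P_B\,(1-P_B)^j\,[Q^{j+1}]_{b,a}\;\lambda_{10}\;\lambda_{00}^k\,[Q^k]_{a,x}.
\end{equation*}
The factorization uses the Markov property of $X$ at the times $\sigma$, $\tau$, $t$, together with independence of the outcomes at distinct slots. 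Summing over $j$ gives $Q(I-(1-P_B)Q)^{-1}$, and summing over $k$ gives $(I-\lambda_{00}Q)^{-1}$; this is exactly $T_2$.

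\emph{Case $01$.} This is symmetric, with the roles of Bob and Eve swapped: Eve receives $b=X_\tau$ and Bob's earlier success at $\sigma$ carries $a$, with failure probability $1-P_A$. The result is $T_3$. Case $00$ cannot be the outcome at $\tau$ by definition. Adding the three disjoint cases yields the theorem.

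The step that needs the most care is justifying the decomposition rigorously. Three points must be checked. First, conditioning on ``last non-$00$ slot at $\tau$'' and ``Eve's last success at $\sigma$'' constrains only the outcome variables, which are independent of $X$. Second, the constraints on disjoint slot sets factor; in particular, Bob's outcomes between $\sigma$ and $\tau$ are summed out, which is precisely what turns $\lambda$'s into the marginal $1-P_B$. Third, the limit $t\to\infty$ is needed so that the finite-horizon boundary terms vanish; these come from the initial condition before any reception, and their total weight decays geometrically.

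As a sanity check, I will verify that $\sum_{x,a,b}\pi=1$, using $\boldsymbol{v}^T Q=\boldsymbol{v}^T$ and row sums $(I-cQ)^{-1}\mathbf 1=\mathbf 1/(1-c)$. Alternatively, I could plug the formula into the balance equations $\pi=\pi P$ built from the kernel $\Pr(\mathbf s,\mathbf s')$, but the renewal argument is cleaner.
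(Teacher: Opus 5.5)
Your proposal is correct and follows essentially the paper's route. Your last non-$00$ slot $\tau$ and the other receiver's last success $\sigma$ are a reparametrization of the joint age pair $(\Theta_t,\Theta_t^e)$ that the paper conditions on: outcome $11$, $10$, $01$ at $\tau$ corresponds to $i=j$, $i<j$, $i>j$, with $k=\min(\Theta_t,\Theta_t^e)$ and $j+1$ equal to the age gap. Your slot-by-slot factorization reproduces the paper's products $\rho(i,j)\,\phi(x,a,b,i,j)$ before summing the same geometric matrix series. Your direct computation over disjoint slot sets, and your explicit handling of the pre-reception boundary terms, are somewhat more careful than the paper's heuristic derivation of $\rho(0,\Delta)$.
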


\begin{IEEEproof}
    See Appendix~A.
\end{IEEEproof}
It follows from Theorem~\ref{thm:stationary} that $\pi(x,a,b)$ is rational in the policy parameter $p_\alpha$, which makes the joint-event metric analytically tractable under randomized stationary policies. Indeed, $\lambda_{00},\lambda_{10},\lambda_{01},\lambda_{11}$, and hence $P_A$ and $P_B$, are affine in $p_\alpha$. Moreover, for $c\in\{\lambda_{00},1-P_A,1-P_B\}$, the resolvent matrix $(I-cQ)^{-1}$ is rational in $c$, so these resolvent matrices are rational in $p_\alpha$ since each $c$ is affine in $p_\alpha$. As a result, $T_1(x,a,b)$, $T_2(x,a,b)$, $T_3(x,a,b)$, and hence $\pi(x,a,b)$ are rational in $p_\alpha$. This rational structure will be exploited to derive closed-form expressions for the long-term average \ac{cra} and the corresponding optimal \ac{rs} policy.

\subsection{Average \ac{cra}}
By the definition of $A_{\text{CRA}}[t]$ in \eqref{eq:A_sec_def}, the event of successful confidential reconstruction occurs if and only if $(X_t, \hat{X}_t, \hat{X}^e_t)=(0,0,1)$ or $(1,1,0)$. Since the process $\{\mathbf{S}_t\}$ is ergodic as analyzed in Section~\ref{subsec:system_dynamics}, we have
\begin{equation*}
    \bar{A}_{\text{CRA}} = \pi (0,0,1) +\pi (1,1,0).
\end{equation*}
By Theorem~\ref{thm:stationary}, each stationary probability $\pi (x,a,b)$ is rational in $p_{\alpha}$. Therefore, $\bar{A}_{\text{CRA}}(p_{\alpha})$ is also rational in $p_{\alpha}$, which yields the following corollary.
\begin{corollary} \label{cor:average_cra}
The long-term average \ac{cra} is given by
\begin{equation*}
\bar{A}_{\text{CRA}}(p_{\alpha})
= \frac{A\,p_{\alpha}+B} {C\,p_{\alpha}^{2}+D\,p_{\alpha}+E},\label{eq:opt_p}
\end{equation*}
where
\begin{align*}
A &= p q \left[ p_s^2 (1-p_s^e)(p+q-2) + (p_s^e)^2 (1-p_s) (p+q) \right],\\
B &= p q (p+q) (2 p_s p_s^e - p_s - p_s^e),\\
C &= (p_s p_s^e - p_s - p_s^e) p_s p_s^e (p+q-1)^2 (p+q), \\
D &= (p_s p_s^e - p_s - p_s^e) ( p_s + p_s^e) (1- p-q)(p+q)^2, \\
E &= (p_s p_s^e - p_s - p_s^e) (p+q)^3.
\end{align*}
\end{corollary}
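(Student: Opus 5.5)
By ergodicity of $\{\mathbf S_t\}$ (Section~\ref{subsec:system_dynamics}) we have $\bar A_{\text{CRA}}=\pi(0,0,1)+\pi(1,1,0)$. The plan is to evaluate these two entries directly from Theorem~\ref{thm:stationary}, using an explicit spectral form of the $2\times 2$ resolvents. Both target states have $a\neq b$, so $\delta(a,b)=0$ and $T_1$ does not contribute. Hence
\[
\bar A_{\text{CRA}}=\sum_{(x,a,b)\in\{(0,0,1),(1,1,0)\}}\bigl(T_2(x,a,b)+T_3(x,a,b)\bigr),
\]
which is a sum of four products of resolvent entries.

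\emph{Step 1 (resolvents in closed form).} Write $r:=1-p-q$ and $\Pi:=\mathbf 1\boldsymbol v^T$. Then $Q=\Pi+r(I-\Pi)$, with $\Pi$ and $I-\Pi$ complementary idempotents and $I-\Pi=\begin{bmatrix} v_1&-v_1\\-v_0&v_0\end{bmatrix}$. For any $c$ this gives
\[
(I-cQ)^{-1}=\frac{\Pi}{1-c}+\frac{I-\Pi}{1-cr},\qquad
Q\bigl(I-(1-P)Q\bigr)^{-1}=\frac{\Pi}{P}+\frac{r\,(I-\Pi)}{1-(1-P)r}.
\]
I will use these with $c=\lambda_{00}$ and $P\in\{P_A,P_B\}$, noting that $1-\lambda_{00}=p_\alpha\sigma$ with $\sigma:=p_s+p_s^e-p_sp_s^e$, $P_A=p_\alpha p_s$, $P_B=p_\alpha p_s^e$, and $1-(1-P)r=(p+q)+Pr$.

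\emph{Step 2 (assemble).} Substitute these entries into $T_2$ and $T_3$ at $(0,0,1)$ and $(1,1,0)$. The prefactors $\lambda_{10}P_B/P_B$ and $\lambda_{01}P_A/P_A$ attached to the $\Pi$-parts simplify immediately. The $\lambda$'s carry a factor $p_\alpha$, which cancels against the $1/(1-\lambda_{00})=1/(p_\alpha\sigma)$ term. This explains why $\sigma$, i.e.\ $-(p_sp_s^e-p_s-p_s^e)$, appears as a common factor in $C,D,E$ and in the numerator after rescaling.

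The target denominator should factor as
\[
C p_\alpha^2+Dp_\alpha+E\;\propto\;(p+q+p_sr\,p_\alpha)(p+q+p_s^e r\,p_\alpha)=\bigl(1-(1-P_A)r\bigr)\bigl(1-(1-P_B)r\bigr).
\]
I will check this by expanding: the $p_\alpha^2$ coefficient is $p_sp_s^er^2$ and the $p_\alpha$ coefficient is $(p_s+p_s^e)r(p+q)$, matching $C,D,E$ up to the common factor $(p_sp_s^e-p_s-p_s^e)(p+q)$. So I will put all terms over this common denominator.

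\emph{Main obstacle.} The delicate point is the factor $1-\lambda_{00}r$ coming from the $(I-\Pi)$-part of $(I-\lambda_{00}Q)^{-1}$, which does not appear in the final denominator. It must cancel once the four contributions are summed with the signs of $I-\Pi$ and the weights $v_0,v_1$. I would isolate all terms containing $1/(1-\lambda_{00}r)$ and show that their numerator is divisible by $1-\lambda_{00}r$. Equivalently, I would verify that the combined numerator vanishes at $p_\alpha$ satisfying $\lambda_{00}r=1$, using $1-\lambda_{00}=p_\alpha\sigma$ so that everything is polynomial in $p_\alpha$.

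If this step turns out messy, the fallback is to verify the claimed rational function directly. Because both sides are rational in $p_\alpha$ of bounded degree, it suffices to check that the claimed $\pi$-values satisfy the stationary balance equations, or to match enough sample values of $p_\alpha$.

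After the cancellation, I will collect the numerator as a polynomial in $p_\alpha$, multiply through by $pq(p+q)$ via $v_0v_1=pq/(p+q)^2$, and read off $A$ and $B$.
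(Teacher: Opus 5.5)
Your proposal is correct and follows the paper's route: the paper obtains the corollary by substituting Theorem~\ref{thm:stationary} into $\pi(0,0,1)+\pi(1,1,0)$, where $T_1$ vanishes, and leaves the algebra implicit, and your spectral form of the resolvents is a clean way to carry that algebra out. The cancellation you flag does go through: over $(p+q+P_Ar)(p+q+P_Br)$, the terms carrying $1/(1-\lambda_{00}r)$ have numerator $\lambda_{10}(p+q+P_Ar)-\lambda_{01}(p+q+P_Br)=p_\alpha(p_s-p_s^e)(p+q+p_\alpha r\sigma)$, and $1-\lambda_{00}r=p+q+p_\alpha r\sigma$, so no fallback is needed (note also that $\sigma$ ends up only in the denominator, not as a factor of $Ap_\alpha+B$).
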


\begin{remark}
Corollary~\ref{cor:average_cra} is stated for the general case $p_s\neq p_s^e$. The symmetric case $p_s = p_s^e$ is degenerate, since the numerator and denominator share a common factor. In that case, the expression should be simplified before substitution to avoid an artificial zero denominator.
\end{remark}

\subsection{Optimal \ac{rs} Policy}
We consider the problem of maximizing the long-term average \ac{cra} over the transmission probability $p_\alpha$
\begin{equation}
    \mathscr{P}1: \quad \max_{ 0 < p_{\alpha} \le 1} \quad \bar{A}_{\text{CRA}}.\label{eq:def_P1}
\end{equation}

Corollary~\ref{cor:average_cra} shows that $\bar{A}_{\text{CRA}}(p_\alpha)$ admits a closed-form rational form, which makes $\mathscr{P}1$ analytically tractable. The resulting optimal transmission probability is given next.

\begin{theorem} \label{thm:opt_P1}
The optimal solution to Problem $\mathscr{P}1$ is 
\begin{equation*}
p_\alpha^*= \left[\dfrac{-BC+\sqrt{\Delta}}{AC}\right]_0^1,
\end{equation*}
where $\left[x\right]_0^1:=\min\{1,\max\{x,0\}\}$ denotes the Euclidean projection of $x$ onto the feasible interval $[0,1]$ and $\Delta:=B^2C^2+AC(AE-BD)$, where $\Delta$ is proven to be nonnegative.
\end{theorem}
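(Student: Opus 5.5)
The plan is to treat $\mathscr{P}1$ as a one-dimensional maximization of the rational function in Corollary~\ref{cor:average_cra}. I write $f(p_\alpha):=\frac{Ap_\alpha+B}{g(p_\alpha)}$ with $g(p_\alpha):=Cp_\alpha^2+Dp_\alpha+E$, and study the sign of $f'$ on $[0,1]$.

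\textbf{Step 1: $g$ has no zero on $[0,1]$.} Set $\kappa:=p_sp_s^e-p_s-p_s^e<0$, $u:=p+q\in(0,2)$ and $r:=1-p-q\in(-1,1)$. Every coefficient $C,D,E$ carries the factor $\kappa u$, so
\begin{equation*}
g(p_\alpha)=\kappa u\left[p_sp_s^e r^2p_\alpha^2+(p_s+p_s^e)ru\,p_\alpha+u^2\right]=\kappa u\,(u+p_sr\,p_\alpha)(u+p_s^er\,p_\alpha).
\end{equation*}
For $c\in\{p_s,p_s^e\}\subset(0,1)$ and $p_\alpha\in[0,1]$ we have $u+rcp_\alpha=1-r(1-cp_\alpha)$. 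Since $|r|<1$ and $0<1-cp_\alpha\le 1$, this quantity is strictly positive. Hence $g<0$ throughout $[0,1]$, $f$ is smooth there, and its poles $p_1=-u/(p_sr)$ and $p_2=-u/(p_s^er)$ lie outside the feasible interval.

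\textbf{Step 2: the stationarity equation.} Differentiating gives $f'=N/g^2$, where
\begin{equation*}
N(p_\alpha)=A\,g(p_\alpha)-(Ap_\alpha+B)(2Cp_\alpha+D)=-\left(ACp_\alpha^2+2BCp_\alpha-(AE-BD)\right).
\end{equation*}
The roots of $N$ are therefore $p_\alpha=\frac{-BC\pm\sqrt{\Delta}}{AC}$, with $\Delta=B^2C^2+AC(AE-BD)$ exactly as in the statement. (I assume $A\neq0$ and $C\neq0$ here; the cases $r=0$ or $A=0$ are degenerate and can be handled separately, since $N$ is then linear or constant.)

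\textbf{Step 3: $\Delta\ge 0$, the main obstacle.} Brute-force expansion in $p,q,p_s,p_s^e$ will be messy, so I would use the factorization from Step 1. Writing $g=C(p_\alpha-p_1)(p_\alpha-p_2)$, a direct computation should show that
\begin{equation*}
\Delta=C^2(Ap_1+B)(Ap_2+B),
\end{equation*}
up to the obvious normalization. Thus $\Delta\ge0$ is equivalent to the numerator $Ap_\alpha+B$ taking the same sign (or vanishing) at the two poles. Equivalently, the zero $-B/A$ of the numerator must not lie strictly between $p_1$ and $p_2$.

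I would verify this by substituting $p_1=-u/(p_sr)$ and $p_2=-u/(p_s^er)$ together with the explicit $A$ and $B$. Each product $Ap_i+B$ should then reduce to $pq$ times a factor whose sign is determined by $p_s$ versus $p_s^e$ and by the sign of $r$. Matching those signs is the delicate part. If it resists, a fallback is a probabilistic argument: $f$ is a probability on all of $\mathbb{R}_{\ge0}$, so it is bounded wherever $g\ne0$. One could then argue that a rational function of this shape whose numerator vanishes between its poles would be forced to change sign incompatibly with $0\le f\le 1$ on the physically meaningful range.

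\textbf{Step 4: identify the maximizer and project.} $N$ is a quadratic with leading coefficient $-AC$. I would show that the root $p^\circ=\frac{-BC+\sqrt\Delta}{AC}$ is the one at which $N$, and hence $f'$, changes sign from positive to negative, checking the two cases $AC>0$ and $AC<0$ separately. In each case the other root lies either outside $(p_1,p_2)$-adjacent feasible region or is a local minimizer, so on $[0,1]$ the function $f$ is unimodal: increasing before $p^\circ$ and decreasing after it.

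Consequently, if $p^\circ\in[0,1]$ it is the global maximizer. If $p^\circ>1$, then $f$ is increasing on $[0,1]$ and the optimum is $1$. If $p^\circ<0$, then $f$ is decreasing and the supremum is approached at $0$, which corresponds to the lower projection endpoint. This gives $p_\alpha^*=[p^\circ]_0^1$.
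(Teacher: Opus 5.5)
The genuine gap is in Step~4. Knowing that $p^\circ=\frac{-BC+\sqrt{\Delta}}{AC}$ is the root where $N$ changes sign from positive to negative is only local information. The projection formula also needs the \emph{other} root of $N$ to lie outside $(0,1)$, and your remark that it ``lies outside \ldots\ or is a local minimizer'' does not supply this. If, e.g., $AC>0$ and the smaller root $r_-$ were in $(0,1)$, then $f$ would decrease on $(0,r_-)$ and increase up to $p^\circ$. The maximizer could then be the endpoint $0$ rather than $[p^\circ]_0^1$.

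The missing idea is to locate the vertex of $N$. Its roots are $-\frac{B}{A}\pm\frac{\sqrt{\Delta}}{AC}$, symmetric about $-B/A$, which is exactly the zero of the numerator $Ap_\alpha+B$. The paper verifies $B\le 0$ and $A+B\le 0$ explicitly. Hence $-B/A\le 0$ when $A<0$, forcing $r_-\le 0$, and $-B/A\ge 1$ when $A>0$, forcing the larger root to be $\ge 1$. In both cases $N$ is decreasing on $[0,1]$, so $f$ is unimodal there. You already have a one-line proof of these sign facts. Since $f\ge 0$ on $(0,1]$ (it is a probability) and your Step~1 gives $g<0$ on $[0,1]$, we get $Ap_\alpha+B=fg\le 0$ on $(0,1]$. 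So $B\le 0$, $A+B\le 0$, and the zero of this nonconstant linear function cannot lie in $(0,1)$. Steps~1--2 are correct; Step~1 is in fact more careful than the paper, which simply asserts the denominator does not vanish.

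Step~3 is also left unfinished, but your route does close. The identity $\Delta=C^2(Ap_1+B)(Ap_2+B)$ holds exactly. Substituting $p_1=-\frac{u}{p_s r}$ and $p_2=-\frac{u}{p_s^e r}$ into the explicit $A,B$ gives
\begin{align*}
Ap_1+B&=\frac{pq\,u\,(p_s-p_s^e)}{p_s\,r}\bigl[p_s+u\,p_s^e(1-p_s)\bigr],\\
Ap_2+B&=\frac{pq\,u\,(p_s-p_s^e)}{p_s^e\,r}\bigl[p_s^e+(1+r)\,p_s(1-p_s^e)\bigr].
\end{align*}
Both have the sign of $(p_s-p_s^e)/r$, so their product is nonnegative. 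Multiplying by $C^2$ reproduces exactly the factorized $\Delta$ in the paper. Your probabilistic fallback for this step would not work. The closed form is a probability only for $p_\alpha\in(0,1]$, and $\Delta<0$ would merely make $N$ sign-definite, i.e., $f$ monotone on $[0,1]$, which contradicts nothing. The probabilistic observation belongs in Step~4, not Step~3.
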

\begin{IEEEproof}
    See Appendix~B.
\end{IEEEproof}

\begin{remark}
The result of Theorem~\ref{thm:opt_P1} extends directly to any interval constraint \(p_\alpha\in[p_\ell,p_u]\) with \(0< p_\ell\le p_u\le 1\). In that case, the optimal solution is obtained by projecting the same closed-form candidate onto \([p_\ell,p_u]\), namely
\begin{equation*}
p_\alpha^*
=
\left[\frac{-BC+\sqrt{\Delta}}{AC}\right]_{p_\ell}^{p_u},
\end{equation*}
with \(
[x]_{p_\ell}^{p_u}:=\min\{p_u,\max\{x,p_\ell\}\}.
\)
\end{remark}

\begin{remark}\label{rem:extrem_analysis_opt}
    The expression in Theorem~\ref{thm:opt_P1} corresponds to the general case. In the special case \(p_s=p_s^e\), \(\bar{A}_{\text{CRA}}(p_\alpha)\) is monotonic over $p_\alpha\in(0,1]$: it is increasing when \(p+q>1\) and decreasing when \(p+q<1\), and constant when \(p+q=1\). Accordingly, $p_\alpha^*=1$ if $p+q>1$, $p_\alpha^*\to0^+$ if $p+q<1$, and $p_\alpha^*$ can be any value in $(0,1]$ if $p+q=1$.
 
    Thus, in the symmetric-channel case, \emph{the optimal \ac{rs} policy is determined entirely by the source temporal correlation}: persistent sources favor silence, alternating sources favor full transmission, and the \ac{iid} case is indifferent to the transmission probability.
\end{remark}

\section{Numerical and Simulation Results}
In this section, we first validate the analytical results using Monte Carlo simulations with horizon $T=50,000$, averaged over $400$ independent realizations. We then show how the proposed joint-event metric differs from conventional separate-marginal formulations, and reveal structural behaviors that are invisible under marginal analysis. Finally, we demonstrate a system-level implication of the framework through a geofencing example, where the \ac{cra} metric is used to construct the corresponding secure region and boundary.
\begin{figure}
    \centering
    \includegraphics[width=1\linewidth]{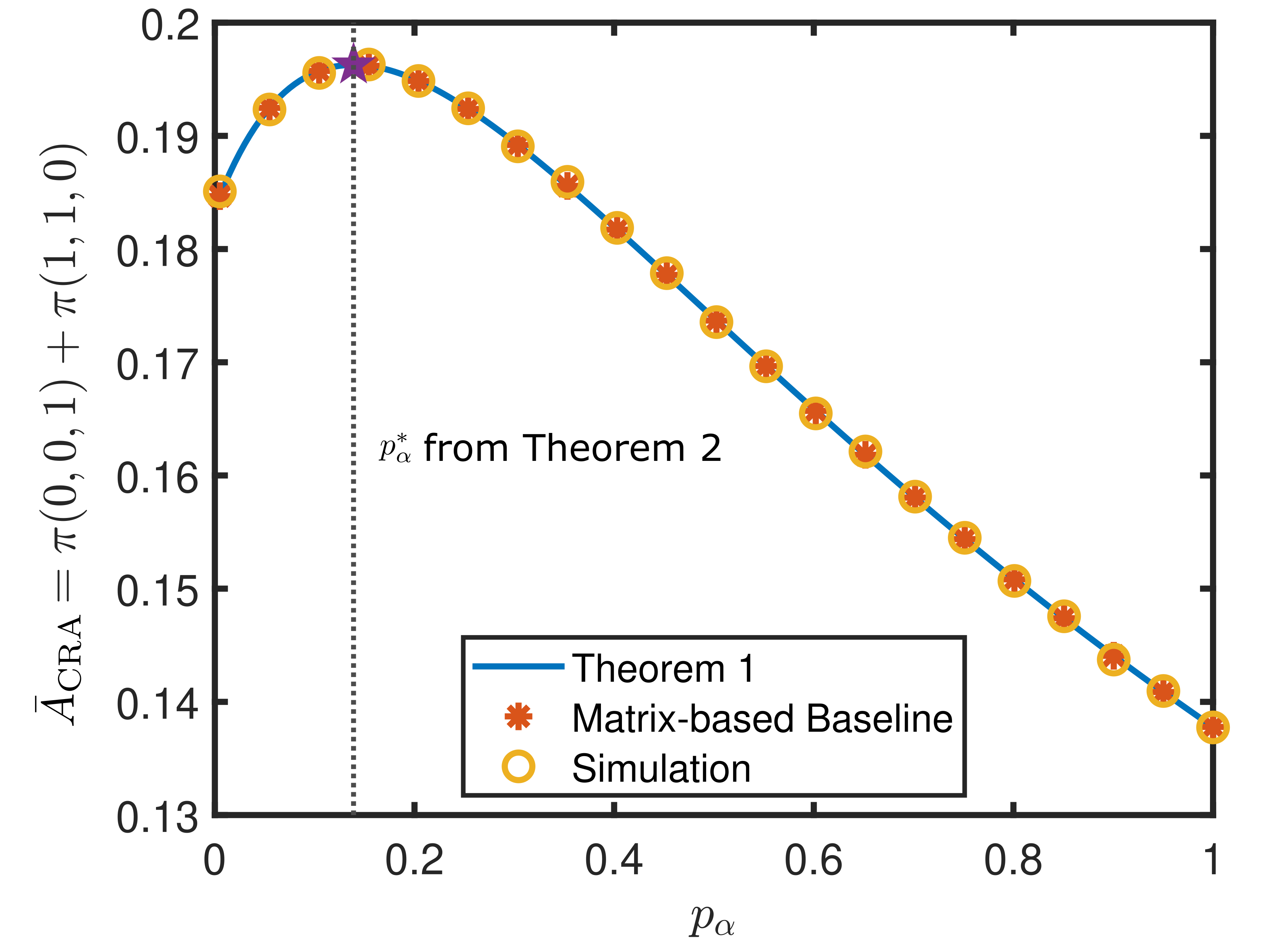}
    \caption{Average simultaneous confidentiality and accuracy $\bar{A}_\text{CRA}$ versus sampling and transmission probability $p_\alpha$.}
    \label{fig:Monto_carlo}
\end{figure}

For comparison, we consider a marginal-analysis baseline, which approximates the joint confidentiality-accuracy objective by separately evaluating the marginal accuracy and confidentiality probabilities and then combining them through a weighted sum. Specifically, define
\begin{align*}
\bar{A}_\omega = &(1-\omega) (\pi_{\{X_t,\hat{X}_t\}}(0,0)+\pi_{\{X_t,\hat{X}_t\}}(1,1))\\
&+ \omega (\pi_{\{X_t,\hat{X}^e_t\}}(0,1)+\pi_{\{X_t,\hat{X}^e_t\}}(1,0)),
\end{align*}
where $\omega\in[0,1]$. Here, $\pi_{\{X_t,\hat{X}_t\}}(0,0)+\pi_{\{X_t,\hat{X}_t\}}(1,1)$ is the marginal reconstruction accuracy at Bob, while $\pi_{\{X_t,\hat{X}^e_t\}}(0,1)+\pi_{\{X_t,\hat{X}^e_t\}}(1,0)$ is the marginal confidentiality term at Eve. By Theorem~\ref{thm:stationary} or equivalently from \cite{SalKouPap:J24}, these terms are given by
\begin{equation*}
\pi_{\{X_t,\hat{X}_t\}}(0,0)
=
\frac{q\bigl(q+p_\alpha p_s(1-q)\bigr)}
{(p+q)\bigl(p+q+p_\alpha p_s(1-p-q)\bigr)},
\end{equation*}
\begin{equation*}
\pi_{\{X_t,\hat{X}_t\}}(1,1)
=
\frac{p\bigl(p+p_\alpha p_s(1-p)\bigr)}
{(p+q)\bigl(p+q+p_\alpha p_s(1-p-q)\bigr)},
\end{equation*}
\begin{equation*}
\pi_{\{X_t,\hat{X}^e_t\}}(0,1)
=
\frac{pq(1-p_\alpha p_s^e)}
{(p+q)\bigl(p+q+p_\alpha p_s^e(1-p-q)\bigr)},
\end{equation*}
\begin{equation*}
\pi_{\{X_t,\hat{X}^e_t\}}(1,0)
=
\frac{pq(1-p_\alpha p_s^e)}
{(p+q)\bigl(p+q+p_\alpha p_s^e(1-p-q)\bigr)}.
\end{equation*}
In particular, $\bar{A}_0$ reduces to the pure accuracy metric, while $\bar{A}_1$ reduces to the pure confidentiality metric.
\begin{figure*}[t]
\begin{centering}
\includegraphics[width=1\textwidth]{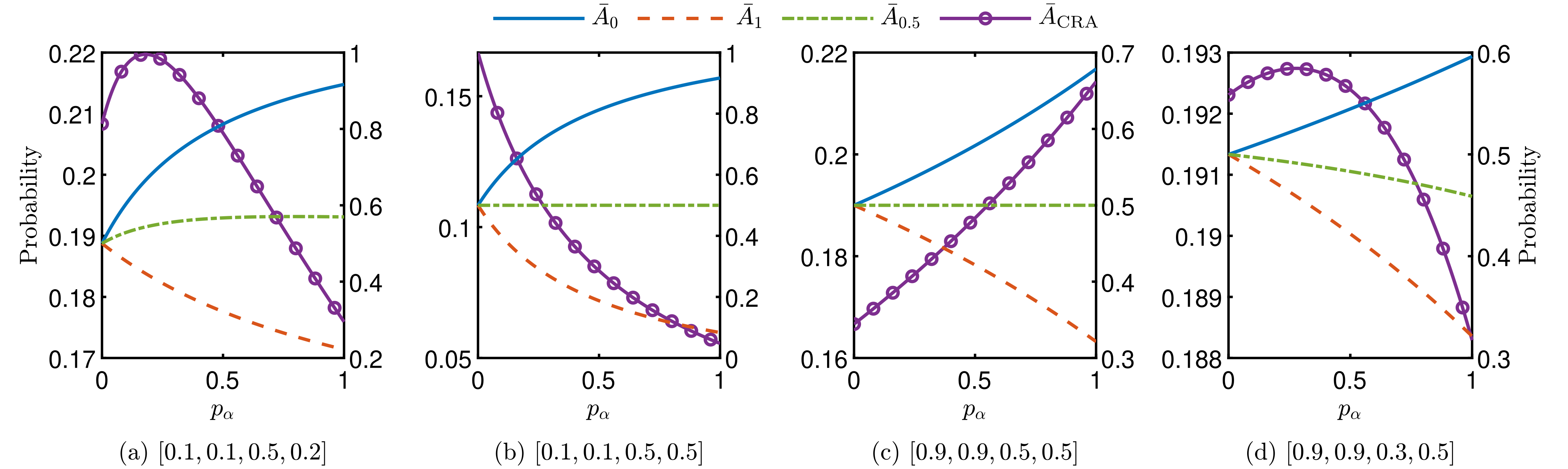}
\par\end{centering}
\caption{\label{fig:P_over_palpha}Probabilities of simultaneous confidentiality and accuracy $\bar{A}_\text{CRA}$, confidentiality only $\bar{A}_1$, accuracy only $\bar{A}_0$, and the balance metric $\bar{A}_{0.5}$ versus the transmission probability under different settings. The metric $\bar{A}_\text{CRA}$ is plotted against the left y-axis, whereas all the other metrics share the right y-axis. The title of each subfigure indicates the corresponding parameter setting $[p, q, p_s, p_s^e]$.}
\end{figure*}

\begin{figure*}[t]
\begin{centering}
\includegraphics[width=1\textwidth]{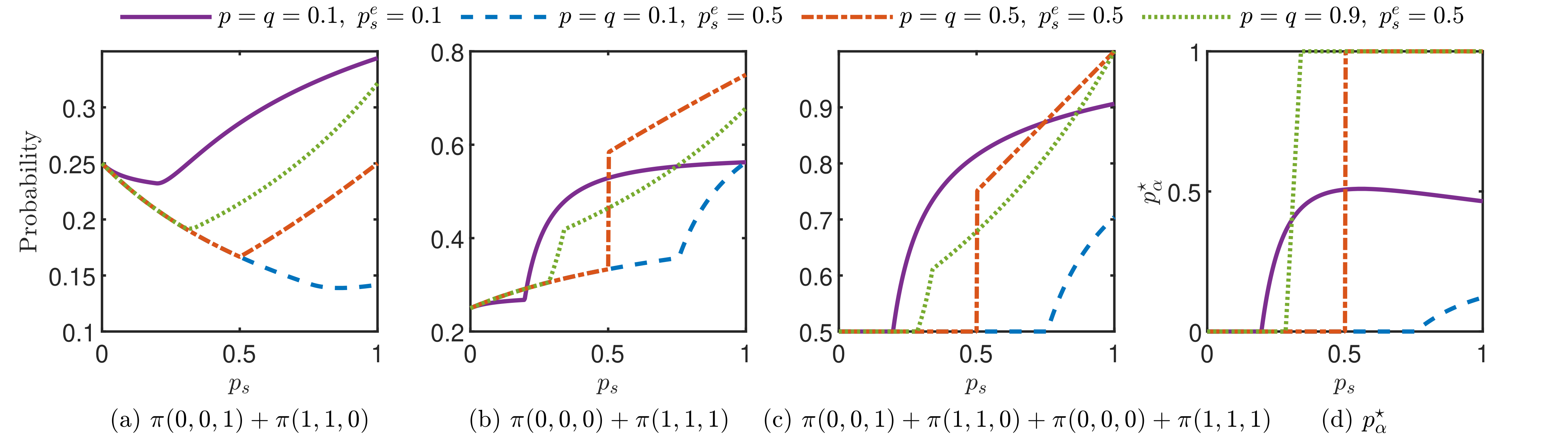}
\par\end{centering}
\caption{\label{fig:P_over_ps} (a) Probability of simultaneous confidentiality and accuracy $\bar{A}_\text{CRA}=\pi(0,0,1)+\pi(1,1,0)$, (b) probability of simultaneous non-confidentiality and accuracy $\pi(0,0,0)+\pi(1,1,1)$, (c) probability of accuracy $\bar{A}_0=\pi(0,0,1)+\pi(1,1,0)+\pi(0,0,0)+\pi(1,1,1)$ under the optimal \ac{rs} policies, and (d) the corresponding optimal transmission probability $p_\alpha^*$.}
\end{figure*}

\subsubsection{Analytical Validation}
Fig.~\ref{fig:Monto_carlo} shows the average simultaneous confidentiality and accuracy $\bar{A}_\text{CRA}$ versus the sampling probability $p_\alpha$. The closed-form expression from Theorem~\ref{thm:stationary}, the numerical stationary solution obtained from the normalized eigenvector associated with the unit eigenvalue of the transition matrix, and the Monte Carlo results are in near-perfect agreement, thereby validating the joint system-dynamics analysis and stationary characterization. In addition, the optimizer and optimal value predicted by Theorem~\ref{thm:opt_P1} coincide with the peak of the curve, confirming the policy characterization.
\begin{figure*}[t]
\begin{centering}
\includegraphics[width=1\textwidth]{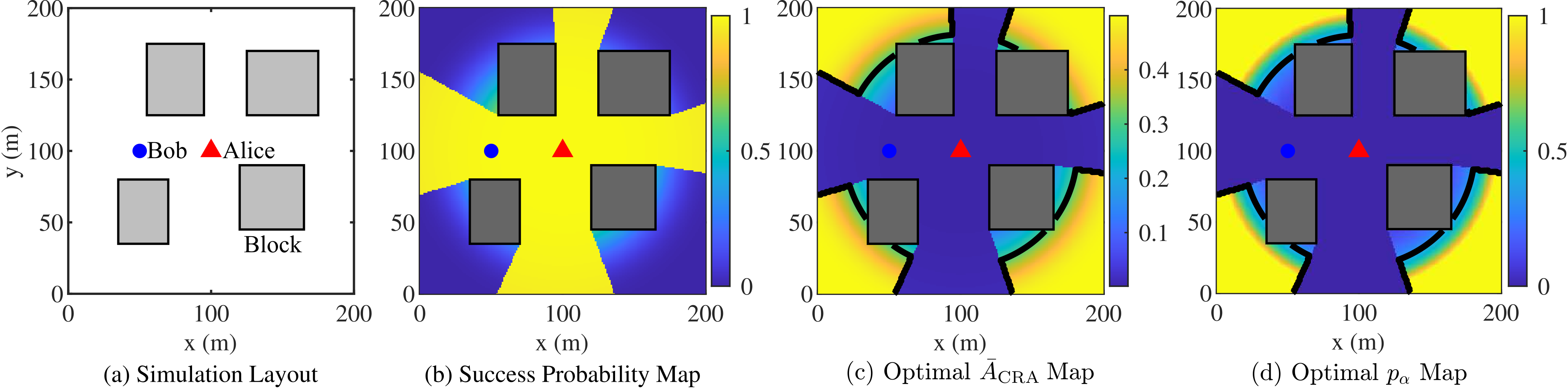}
\par\end{centering}
\caption{\label{fig:geofencing} Simulation layout and spatial maps under arbitrary eavesdropper locations. (a) shows the simulation layout; (b) shows Eve's channel success-probability map; (c) shows the corresponding optimal $\bar{A}_{\text{CRA}}$ map; and (d) shows the corresponding optimal transmission-probability $p_\alpha$ map. The black contour in (c) and (d) represents the geofencing boundary induced by the threshold $\bar{A}_{\text{CRA}}=0.3$.}
\end{figure*}
\subsubsection{\ac{cra}-Induced Insights}
Fig.~\ref{fig:P_over_palpha} shows that the optimal transmission probability for maximizing $\bar{A}_{\text{CRA}}$ is environment-dependent, unlike the marginal metrics $\bar{A}_0$ and $\bar{A}_1$, whose optima are always attained at $p_\alpha=1$ and $p_\alpha=0$, respectively. In Fig.~\ref{fig:P_over_palpha}a, although the legitimate channel is better than the eavesdropping channel, transmitting more frequently is not always beneficial: a larger $p_\alpha$ improves accuracy but degrades confidentiality, and excessive transmissions eventually reduce the probability of being simultaneously confidential and accurate. In Fig.~\ref{fig:P_over_palpha}d, by contrast, even when the eavesdropping channel is better than the legitimate channel, increasing $p_\alpha$ to a proper level can still improve $\bar{A}_{\text{CRA}}$, because the accuracy gain outweighs the confidentiality loss.

Fig.~\ref{fig:P_over_palpha} also shows that secure reconstruction cannot be reduced to a weighted combination of two separate marginal metrics. In Fig.~\ref{fig:P_over_palpha}a, the balance metric $\bar{A}_{0.5}$ both misidentifies the optimal transmission probability and overestimates the achievable performance: it suggests an apparent optimum of about $0.57$ with $p_\alpha = 0.8$, whereas the actual probability of being simultaneously confidential and accurate under the corresponding policy is only about $0.189$, below the true optimum of approximately $0.22$ with less transmission probability $p_\alpha = 0.18$. Figs.~\ref{fig:P_over_palpha}b and \ref{fig:P_over_palpha}c further illustrate an extreme case with $p_s=p_s^e$, where $\bar{A}_{0.5}$ is insensitive to $p_\alpha$, yet the optimal $p_\alpha$ for $\bar{A}_{\text{CRA}}$ still varies significantly with the source dynamics, similar to the extreme analysis in Remark~\ref{rem:extrem_analysis_opt}: when the source evolves slowly, a small $p_\alpha$ is preferred since the state can often be inferred from temporal correlation, whereas when the source evolves rapidly, a large $p_\alpha$ becomes preferable because guessing is no longer reliable.

Fig.~\ref{fig:P_over_ps} reveals three key observations. First, improving the legitimate channel quality $p_s$ does not necessarily increase the simultaneous confidential-and-accurate probability $\pi(0,0,1)+\pi(1,1,0)$ monotonically. As shown in Figs.~\ref{fig:P_over_ps}a--\ref{fig:P_over_ps}c, although the overall probability of accurate reconstruction always increases with $p_s$, the increment may initially be dominated by the non-confidential component $\pi(0,0,0)+\pi(1,1,1)$, which causes the confidential-and-accurate probability to decrease before increasing. Second, when the eavesdropping channel is strong, improving $p_s$ alone may be insufficient to deliver meaningful confidentiality gain. In this case, increasing the source dynamics ($p=q=0.1$ to $p=q=0.9$) or weakening the eavesdropping channel ($p_s^e=0.5$ to $p_s^e=0.1$) is substantially more effective. Third, the optimal transmission probability $p_\alpha^*$ is highly environment-dependent and can even be non-monotonic in $p_s$, as shown in Fig.~\ref{fig:P_over_ps}d. When the source evolves slowly, and the eavesdropping channel is strong, conservative transmission is preferred; when the source evolves faster, aggressive transmission becomes optimal; and when the legitimate channel is already sufficiently reliable, further increasing the transmission frequency may be undesirable due to the additional leakage it incurs.

\subsubsection{\ac{cra}-Induced Geofencing}
To further evaluate the proposed framework in a representative practical scenario, we consider the urban layout shown in Fig.~\ref{fig:geofencing}a, where the transmitter (Alice) is located at the center, and the legitimate receiver (Bob) is placed at a fixed location, while the eavesdropper (Eve) is assumed to be able to appear anywhere in the area. The large-scale propagation is modeled by the 3GPP Urban Micro (UMi) path-loss model in~\cite{TR36814}, where the \ac{los} and \ac{nlos} links are given by
\[
\mathrm{PL}_j=
\begin{cases}
22.0+28.0\log_{10}(d_j)+20\log_{10}(f_c), & \text{LOS},\\
22.7+36.7\log_{10}(d_j)+26\log_{10}(f_c), & \text{NLOS},
\end{cases}
\]
and the small-scale fading is modeled as Rayleigh fading. Based on this channel model, the channel success probability is defined as the probability that the instantaneous channel quality exceeds a prescribed threshold, {\em i.e.}, $\Pr(\gamma>\gamma_{\mathrm{th}})$, which yields the spatial success-probability map shown in Fig.~\ref{fig:geofencing}b. Then, by applying Theorem~\ref{thm:opt_P1} pointwise over space, we obtain the corresponding optimal transmission probability $p_\alpha^*$ and the optimal simultaneous confidentiality-and-accuracy metric $\bar{A}_{\text{CRA}}^*$, as shown in Figs.~\ref{fig:geofencing}c and~\ref{fig:geofencing}d, respectively.

Based on the resulting $\bar{A}_{\text{CRA}}^*$ map, we further define a geofencing region by imposing the constraint $\bar{A}_{\text{CRA}}^*<\tau$, where $\tau$ is a prescribed confidentiality and accuracy threshold. The resulting geofencing boundary is shown by the black contour in Figs.~\ref{fig:geofencing}c and~\ref{fig:geofencing}d. This contour identifies the set of Eve locations that would make the system insufficiently confidential, in the sense that once Eve enters this region, the achievable simultaneous confidentiality-and-accuracy performance falls below the required threshold. It is also observed that the geofencing region becomes relatively large in the left and right directions, where the lack of blockage leads to more favorable propagation for Eve. This observation motivates further study on how to shrink the geofencing region, for example by introducing additional blockages or by designing proactive interference or jamming strategies.

\section{Conclusion}
This paper studied the characterization and optimization of joint accuracy and confidentiality in semantic-aware secure remote reconstruction. By introducing confidential reconstruction accuracy (CRA) and developing a three-dimensional Markov-chain analysis, we derived closed-form expressions for the long-term average CRA and the optimal transmission probability. Numerical results show that the marginal-based analysis fails to capture the intrinsic coupling between legitimate reconstruction and adversarial exposure, and may therefore misidentify the optimal \ac{rs} policy and misestimate the achievable simultaneous accuracy-confidentiality performance. We further revealed nontrivial structural behaviors: more frequent transmissions or better legitimate channels do not necessarily improve joint accurate and confidential reconstruction, and under strong eavesdropping channels, improving the legitimate channel alone may be insufficient, necessitating adjustments to source dynamics or adversary suppression. Finally, we demonstrated the system-level implications of the proposed framework through a geofencing application that establishes spatial safety boundaries for secure remote reconstruction.
\bibliographystyle{IEEEtran}
\bibliography{abrv,ref}
\clearpage
\appendices{}

\section{Proof of Theorem~\ref{thm:stationary}} \label{sec:proof_thm_stationary}
We start by conditioning on the joint age process $(\Theta_t,\Theta^e_t)$. By the law of total probability, we have
\begin{align}
    & \pi(x,a,b) \nonumber\\
    & = \lim_{t\to\infty} \sum_{i=0}^\infty \sum_{j=0}^\infty \Pr(X_t=x, \hat{X}_t=a, \hat{X}^e_t=b \mid \Theta_t=i, \Theta^e_t=j) \nonumber \\
    & \quad \times \Pr(\Theta_t=i,\Theta_t^e=j). \label{eq:stationary_c_age}
\end{align}
Under the receiver estimator model induced by the age variables, as defined in (\ref{eq:def_estimator}), the first term in \eqref{eq:stationary_c_age} becomes
\begin{align*}
    &\Pr(X_t=x, \hat{X}_t=a, \hat{X}^e_t=b \mid \Theta_t=i, \Theta^e_t=j)\\
    & \quad = \Pr(X_t=x, X_{t-i}=a, X_{t-j}=b \mid \Theta_t=i, \Theta^e_t=j)\\
    & \quad = \Pr(X_t=x, X_{t-i}=a, X_{t-j}=b),
\end{align*}
where the last equality holds because the age process is fully determined by $(\alpha_t,H_t)$, as defined in \eqref{eq:def_aoi}, and is independent of the source process $\{X_t\}$ under the \ac{rs} policy.

Since $\{X_t\}$ is a finite-state ergodic Markov chain, and the joint age process $\{(\Theta_t,\Theta_t^e)\}$ is a positive-recurrent Markov chain whose states communicate and state $(0,0)$ has finite expected return time due to $\lambda_{11}>0$, the following limits exist.
\begin{equation*}
    \phi(x,a,b,i,j) := \lim_{t\to\infty} \Pr(X_t=x, X_{t-i}=a, X_{t-j}=b).
\end{equation*}
\begin{equation*}
    \rho(i,j) := \lim_{t\to\infty} \Pr(\Theta_t=i,\Theta_t^e=j),
\end{equation*}
Thus, the joint stationary probability becomes
\begin{equation}
    \pi(x,a,b) = \sum_{i=0}^\infty \sum_{j=0}^\infty \phi(x,a,b,i,j)\rho(i,j). \label{eq:stationary_d_limit}
\end{equation}

Next, we first derive the limiting distributions $\phi(x,a,b,i,j)$ and $\rho(i,j)$ separately, and then combine them to obtain the joint stationary distribution $\pi(x,a,b)$.

\subsection{Stationary Distribution of Joint Age Process}
The stationary distribution of joint age process $\rho(i,j)$ is characterized by three cases.

\textit{Case 1: $i = j$ (Synchronized updates).}
This case corresponds to the event that the most recent successful update is received simultaneously by both Bob and Eve. The age pair is reset to $(0,0)$ with probability $\lambda_{11}$, and then increases synchronously only if neither receiver successfully receives a new update in the subsequent slots. Since a joint failure occurs with probability, $\lambda_{00}$ in each slot, we obtain
\begin{equation}
    \rho(i,i)  = \lambda_{11} (\lambda_{00})^i,\,i\ge0. \label{eq:stationary_a_1}
\end{equation}

\textit{Case 2: $i < j$ (Bob is fresher).}
Let $\Delta = j - i > 0$. Then, at time $t-i$, the system must be in a boundary state $(0, \Delta)$, meaning that Bob has just received a successful update while Eve’s age exceeds Bob’s by $\Delta$. From that time to time $t$, the age pair can evolve from $(0, \Delta)$ to $(i,j)$ only if $i$ consecutive joint failures occur. Therefore, 
\begin{equation}
    \rho(i,j) = \rho(0, \Delta) (\lambda_{00})^i.
    \label{eq:mid_age_pi}
\end{equation}
It remains to determine the boundary term $\rho(0, \Delta)$. The event $(\Theta_{t-i},\Theta_{t-i}^e)=(0, \Delta)$ occurs when Bob successfully receives an update while Eve does not, and Eve’s age in the previous slot equals $\Delta-1$. The former occurs with probability $\lambda_{10}$. Moreover, Eve successfully receives an update in each slot with marginal probability $P_B:=\lambda_{11}+\lambda_{01}$, so its marginal age distribution is geometric 
\begin{equation*}
    \Pr(\Theta^e = k) = P_B (1-P_B)^k,\, k\ge 0
\end{equation*}
Hence,
\begin{equation*}
    \rho(0, \Delta) = \lambda_{10}  P_B (1 - P_B)^{\Delta - 1} .
\end{equation*}
Substituting $\rho(0, \Delta)$ into \eqref{eq:mid_age_pi}, we have 
\begin{equation}
    \rho(i,j) = \lambda_{10} P_B (1 - P_B)^{j - i - 1} (\lambda_{00})^i .\label{eq:stationary_a_2}
\end{equation}

\textit{Case 3: $i > j$ (Eve is fresher).}
This case is symmetric to Case 2, with Bob and Eve interchanged. Hence,
\begin{equation}
    \rho(i,j) = \lambda_{01} P_A (1 - P_A)^{i - j- 1} (\lambda_{00})^j, \label{eq:stationary_a_3}
\end{equation}
where $P_A := \lambda_{11} + \lambda_{10}$ denotes the marginal success probability at Bob.

\subsection{Stationary Three-Time Joint Distribution:}
The stationary three-time joint distribution $\phi(x,a,b,i,j)$ is also characterized by three cases.

\textit{Case 1: $i = j$ (Synchronized updates).}
In this case, the time instants $t-i$ and $t-j$ coincide. Hence, the event has nonzero probability only if $a=b$. Under stationarity, the source is in state $a$ with probability $v_a$, and then evolves from $a$ to $x$ over $i$ steps according to $Q^i$. Therefore,
\begin{equation}
    \phi(x,a,b,i,i) = \delta(a,b) v_a [Q^i]_{a,x}. \label{eq:stationary_3_1}
\end{equation}

\textit{Case 2: $i < j$ (Bob is fresher).}
Here, $t-j<t-i<t$. The process follows the path $\xrightarrow{\infty}b \xrightarrow{j-i} a \xrightarrow{i} x$. By the Markov property and stationarity, we have
\begin{equation}
    \phi(x,a,b,i,j) = v_b [Q^{j-i}]_{b,a} [Q^i]_{a,x}. \label{eq:stationary_3_2}
\end{equation}

\textit{Case 3: $i > j$ (Eve is fresher).}
Here, $t-i<t-j<t$. The process follows the path $\xrightarrow{\infty}a \xrightarrow{i-j} b \xrightarrow{j} x$. Similarly, we have
\begin{equation}
    \phi(x,a,b,i,j) = v_a [Q^{i-j}]_{a,b} [Q^j]_{b,x}. \label{eq:stationary_3_3}
\end{equation}

\subsection{Stationary Distribution of Confidentiality Reconstruction Process}
Having characterized the stationary age distribution $\rho(i,j)$ and the stationary three-time joint distribution, we now derive the stationary distribution $\pi(x,a,b)$. As above, the derivation is carried out by distinguishing three cases.

\textit{Case 1: $i = j$ (Synchronized updates).}
In this case, the two receivers share the same freshest update, and therefore $a=b$. Substituting \eqref{eq:stationary_a_1} and \eqref{eq:stationary_3_1} into \eqref{eq:stationary_d_limit}, we have
\begin{align}
    T_1(x,a,b) &:= \sum_{i=0}^\infty \delta(a,b) v_a [Q^i]_{a,x} \lambda_{11} (\lambda_{00})^i \nonumber \\
    &= \delta(a,b) \lambda_{11} v_a \left[ \sum_{i=0}^\infty (\lambda_{00}Q)^i \right]_{a,x} \nonumber \\
    &= \delta(a,b) \lambda_{11} v_a \left[ (I - \lambda_{00}Q)^{-1} \right]_{a,x}.
\end{align}

\textit{Case 2: $i < j$ (Bob is fresher).}
Let $\Delta = j - i \ge 1$. Substituting \eqref{eq:stationary_a_2} and \eqref{eq:stationary_3_2} into \eqref{eq:stationary_d_limit}, we have
\begin{align}
    & T_2(x,a,b)\nonumber\\ 
    &:= \sum_{i=0}^\infty \sum_{\Delta=1}^\infty v_b [Q^\Delta]_{b,a} [Q^i]_{a,x}  \lambda_{10} P_B (1 - P_B)^{\Delta - 1} (\lambda_{00})^i \nonumber \\
    &= v_b \lambda_{10} P_B \left[ \sum_{\Delta=1}^\infty (1-P_B)^{\Delta-1} Q^\Delta \right]_{b,a} \left[ \sum_{i=0}^\infty (\lambda_{00}Q)^i \right]_{a,x} \nonumber \\
    & = v_b \lambda_{10} P_B \left[ Q(I - (1-P_B)Q)^{-1} \right]_{b,a} \left[ (I - \lambda_{00}Q)^{-1} \right]_{a,x}.
\end{align}

\textit{Case 3: $i > j$ (Eve is fresher).}
By symmetry, let $\Delta = i - j \ge 1$. Substituting \eqref{eq:stationary_a_3} and \eqref{eq:stationary_3_3} into \eqref{eq:stationary_d_limit}, we have
\begin{align}
    T_3(x,a,b) &= v_a \lambda_{01} P_A \left[ Q(I - (1-P_A)Q)^{-1} \right]_{a,b} \nonumber \\
    &\quad \times \left[ (I - \lambda_{00}Q)^{-1} \right]_{b,x}.
\end{align}

Combining the above three cases, we obtain
\begin{equation}
    \pi(x,a,b) = T_1(x,a,b) + T_2(x,a,b) + T_3(x,a,b).
\end{equation}

\section{Proof of Theorem~\ref{thm:opt_P1}} \label{sec:proof_thm_opt_P1}
Before proving Theorem~\ref{thm:opt_P1}, we first characterize the signs of the coefficients. It is straightforward to verify that \(B\le 0\), \(C\le 0\), and \(E\le 0\) for all \(p,q,p_s,p_s^e\in[0,1]\). In contrast, the signs of \(A\)  and \(D\) are parameter-dependent and cannot be determined a priori. Next, to determine the optimal $p_{\alpha}$, we analyze the shape of \(\bar{A}_{\text{CRA}}(p_\alpha)\) via its first- and second-order derivatives.

From Corollary~\ref{cor:average_cra}, the first derivative of \(\bar{A}_{\text{CRA}}(p_\alpha)\) is
\begin{equation*}
\frac{d\bar{A}_{\text{CRA}}}{dp_\alpha}
=
\frac{-AC\,p_\alpha^2-2BC\,p_\alpha+(AE-BD)}
{\left(Cp_\alpha^2+Dp_\alpha+E\right)^2}.
\end{equation*}
Define
\begin{equation*}
M(p_\alpha):=-AC\,p_\alpha^2-2BC\,p_\alpha+(AE-BD).
\end{equation*}
Since $\left(Cp_\alpha^2+Dp_\alpha+E\right)^2>0$ over the feasible interval, the sign of $\bar{A}_{\text{CRA}}'(p_\alpha)$ is completely determined by $M(p_\alpha)$.

To determine the sign of $M(p_\alpha)$, we first analyze the zero points and the vertex of \(M(p_\alpha)\).

The zero points \(M(p_\alpha)=0\) are given by
\begin{equation*}
r_{\pm}=\frac{-BC\pm\sqrt{\Delta}}{AC},
\qquad
\Delta:=B^2C^2+AC(AE-BD).
\end{equation*}
Substituting the explicit expressions of \(A\), \(B\), \(C\), \(D\), and \(E\) into \(\Delta\), we obtain
\begin{equation*}
\Delta
=
p^2q^2\,p_s p_s^e\,(p+q)^4\,(p+q-1)^2\,(p_s-p_s^e)^2\,
(p_s+p_s^e-p_sp_s^e)^2
\end{equation*}
\begin{equation*}
\qquad\times
\Big[p_s+(p+q)p_s^e(1-p_s)\Big]
\Big[p_s^e+p_s(2-p-q)(1-p_s^e)\Big].
\end{equation*}
Since \(p,q,p_s,p_s^e\in[0,1]\) and \(p+q\in[0,2]\), every factor in the above expression is nonnegative. Hence, \(\Delta\ge 0\).

Next, differentiating \(M(p_\alpha)\) yields
\begin{equation*}
\frac{dM}{dp_\alpha}=-2AC\,p_\alpha-2BC=-2C(Ap_\alpha+B),
\end{equation*}
and thus the only vertex of \(M\) is located at
\begin{equation*}
p_\alpha^{\mathrm v}=-\frac{B}{A}.
\end{equation*}

We next distinguish two cases based on the sign of $p_\alpha^{\mathrm v}$.

\textit{Case 1: \(A<0\).}
Since \(B\le 0\), the vertex of \(M(p_\alpha)\), \(p_\alpha^{\mathrm v}=-B/A\le 0\).
Moreover, \(C\le 0\) and \(A<0\) imply \(-AC<0\). Hence, \(M(p_\alpha)\) is concave and therefore decreasing on \([0,1]\).

Since \(AC\ge0\), the two roots satisfy
\begin{equation*}
r_\mathrm{s}=\frac{-BC-\sqrt{\Delta}}{AC}
<
\frac{-BC+\sqrt{\Delta}}{AC}=r_\mathrm{l}.
\end{equation*}
Because the vertex lies to the left of $[0,1]$, the smaller root satisfies \(r_\mathrm{s}<0\) and only the larger root \(r_\mathrm{l}\) may lie in \([0,1]\). Accordingly, \(\bar{A}_{\text{CRA}}(p_\alpha)\) is increasing, decreasing, or increasing-then-decreasing on \([0,1]\) depending on whether \(r_\mathrm{l}\ge 1\), \(r_\mathrm{l}\le 0\), or \(r_\mathrm{l}\in(0,1)\), respectively. Therefore, the optimal solution is obtained by projecting \(r_\mathrm{l}\) onto \([0,1]\), namely,
\begin{equation*}
p_\alpha^*=\left[r_\mathrm{l}\right]_0^1
=
\left[\frac{-BC+\sqrt{\Delta}}{AC}\right]_0^1.
\end{equation*}

\textit{Case 2: \(A>0\).}
Using the explicit expressions of \(A\) and \(B\), we obtain
\begin{equation*}
A+B
=
-pq(1-p_s^e)\Big[(p+q)(1-p_s)(p_s+p_s^e)+2p_s^2\Big]\le 0.
\end{equation*}
Hence,\(B\le -A\). Since \(A>0\), it follows that the vertex of \(M(p_\alpha)\), \(
p_\alpha^{\mathrm v}=-{B}/{A}\ge 1\). Moreover, \(C\le 0\) and \(A>0\) imply \(-AC>0\). Hence, \(M(p_\alpha)\) is convex and therefore decreasing on \([0,1]\).

Since \(AC<0\), the roots satisfy
\begin{equation*}
r_{\mathrm{s}} = \frac{-BC+\sqrt{\Delta}}{AC}
<
\frac{-BC-\sqrt{\Delta}}{AC} = r_{\mathrm{l}}.
\end{equation*}
Because the vertex lies to the right of \([0,1]\), the larger root satisfies \(
r_{\mathrm{l}}>1\) and only the smaller root \(r_{\mathrm{s}}\) may lie in \([0,1]\). Accordingly, \(\bar{A}_{\text{CRA}}(p_\alpha)\) is increasing, decreasing, or increasing-then-decreasing on \([0,1]\), depending on whether \(r_{\mathrm{s}}\ge 1\), \(r_{\mathrm{s}}\le 0\), or \(r_{\mathrm{s}}\in(0,1)\), respectively. Therefore,
\begin{equation*}
p_\alpha^*=\left[r_{\mathrm{s}}\right]_0^1
=
\left[\frac{-BC+\sqrt{\Delta}}{AC}\right]_0^1.
\end{equation*}

The stated result then follows by combining the two cases.

\end{document}